\newtheorem{theorem}{Theorem}[section]
\newtheorem{remark}{Remark}[section]
\newtheorem{proposition}[theorem]{Proposition}
\newtheorem{problem}{Problem}[section]
\providecommand{\keywords}[1]{\textbf{\textit{Index terms---}}}
\title{\LARGE \bf Non-causal regularized least-squares for continuous-time system identification with band-limited input excitations}
\author{Rodrigo A. \text{Gonz\'alez}, Cristian R. Rojas, and H{\aa}kan Hjalmarsson 
\thanks{This work was supported by the Swedish Research Council under contract number 2016-06079 (NewLEADS). The authors are with the Division of Decision and Control Systems, KTH Royal Institute of Technology, 10044 Stockholm, Sweden (e-mails: grodrigo@kth.se; crro@kth.se; hjalmars@kth.se).}%
}
\begin{document}

\maketitle

\begin{abstract}
	In continuous-time system identification, the intersample behavior of the input signal is known to play a crucial role in the performance of estimation methods. One common input behavior assumption is that the spectrum of the input is band-limited. The sinc interpolation property of these input signals yields equivalent discrete-time representations that are non-causal. This observation, often overlooked in the literature, is exploited in this work to study non-parametric frequency response estimators of linear continuous-time systems. We study the properties of non-causal least-square estimators for continuous-time system identification, and propose a kernel-based non-causal regularized least-squares approach for estimating the band-limited equivalent impulse response. The proposed methods are tested via extensive numerical simulations.
\end{abstract}
\begin{keywords}
System identification; Continuous-time systems; Parameter estimation; Least-squares; Regularization.
\end{keywords}

\section{Introduction}

Continuous-time system identification studies how to obtain continuous-time mathematical models of systems based on sampled input and output data. This field, together with its discrete-time counterpart, has had a deep impact in many areas of science and engineering, and significant pieces of literature have been written on the subject, see, e.g.,~\cite{rao2006identification,garnier2008book}.

In both continuous and discrete-time system identification, methods should be picked according to the assumptions the user makes on the input signal. Three main assumptions can be commonly found: that the input is piecewise constant, piecewise linear, or band-limited, the latter meaning that the power spectrum of the signal is zero above a certain frequency. 
Due to the advantages provided by the Nyquist-Shannon reconstruction theorem, which permits the exact intersample behavior of the signal to be known based on samples, band-limited signals have been studied extensively in signal processing, filter theory, and spectral theory. In system identification, parametric continuous-time system identification under band-limited inputs has been carried out mostly in the frequency domain \cite{pintelon1997frequency}, in which the description of such signals is natural. For the case of continuous-time multisine inputs, a least-squares method in the frequency domain that has been widely used is Levy's method \cite{levy1959complex}, and a time-domain method based on refined instrumental variables has recently been introduced and analyzed in~\cite{gonzalez2020consistent}. 

In this paper we first show that, under band-limited input assumptions, the equivalent discrete-time system is non-causal. A similar observation has been made in, e.g., \cite{feuer1996sampling}, where it is stated that the reconstruction of band-limited signals is a non-causal filtering procedure. However, the implications of this sampling result to the equivalent discrete-time system description seem to have been overlooked in the literature (see, e.g., Eq. (2) of \cite{relan2016recursive}, in which a direct term has been added instead of a fully non-causal discrete-time representation). 
 Band-limited input signals were also used in \cite{rabiner1978fir} for estimating a discrete-time finite-impulse response (FIR) filter. This work was done in the discrete-time domain and the non-causal components that arise were also neglected.

The non-causal discrete-time equivalent system description leads to direct ways to estimate the non-causal impulse response, and therefore the continuous-time frequency response, based on sampled band-limited data. Non-causal system identification has been recently studied in \cite{lu2019identification}, in which a Maximum Likelihood estimator is proposed for symmetric non-causal systems with applications to cross direction modeling of paper machines. In particular, non-causal FIR models have been used for identifying systems in closed-loop in \cite{forssell2000projection, aljanaideh2017closed}.

In summary, the main results of this paper are:
\begin{itemize}
	\item
	We show that the equivalent discrete-time system~arising from a band-limited input signal is non-causal, and we analyze its properties.
	\item
	We propose a least-squares estimator for computing the non-causal discrete-time impulse response, and derive its asymptotic distribution.
	\item
	We present a non-causal regularized least-squares method for estimating the non-causal impulse response that represents the continuous-time system.
	\item
	We illustrate our methods via extensive Monte Carlo simulations.	
\end{itemize}

The rest of this work is organized as follows. In Section \ref{sec:preliminaries} we introduce basic concepts of band-limited signals and their implications on linear systems, and we state the problem we study. In Section \ref{sec:approaches} we present the least-squares approach for estimating the non-causal band-limited equivalent impulse response, and propose non-causal regularization methods for improving its performance. Section \ref{sec:simulations} contains extensive numerical experiments evaluating the algorithms, and we provide concluding remarks in Section \ref{sec:conclusions}. Proofs of the theoretical results can be found in the Appendix.

\section{Preliminaries}
\label{sec:preliminaries}
In this section, we will discuss the topic of sampling band-limited signals from a continuous-time system standpoint. In particular, we recall the concept of a band-limited signal, and introduce the non-causal discrete-time impulse response that is obtained under band-limited assumptions in the input. 

Consider the following system description
\begin{equation}
\label{system}
x(t) = \int_{0}^{\infty} g(\tau) u(t-\tau) \textnormal{d}\tau,
\end{equation}
where $\{u(t)\}$ is a scalar input of the continuous-time, asymptotically stable, linear and time invariant system that has a causal impulse response $\{g(t)\}$, and $\{x(t)\}$ is the output. The frequency response of the system and the continuous-time Fourier transform of the input are given by
\begin{equation}
G(i\omega) = \int_{0}^\infty g(t) e^{-i\omega t}\textnormal{d}t \hspace{0.2cm}\textnormal{and}\hspace{0.2cm} U(i\omega) = \int_{-\infty}^{\infty} u(t) e^{-i\omega t} \textnormal{d}t \notag
\end{equation}
respectively. The key assumption in this work is that the input signal is \textit{band-limited}, that is, $\{u(t)\}$ does not have energy above a certain frequency $\omega_B$. In other words, $U(i\omega) = 0$ for $|\omega| >\omega_B$. If $u(t)$ is sampled every $h$ seconds, where $h<\pi/\omega_B$, then its discrete-time Fourier transform pair is given by
\begin{equation}
U_{\hspace{-0.02cm}h}\hspace{-0.03cm} (\hspace{-0.02cm}e^{i\omega h}\hspace{-0.025cm})\hspace{-0.11cm} =\hspace{-0.08cm}  h\hspace{-0.22cm} \sum_{k=-\infty}^{\infty} \hspace{-0.25cm} u\hspace{-0.02cm}(\hspace{-0.01cm}kh\hspace{-0.01cm}) e^{\hspace{-0.04cm} -i\omega kh} \hspace{-0.24cm} \iff \hspace{-0.17cm}  u\hspace{-0.02cm}(\hspace{-0.01cm}kh\hspace{-0.01cm})\hspace{-0.1cm}  = \hspace{-0.09cm}\hspace{-0.07cm} \int_{\hspace{-0.07cm} -\hspace{-0.02cm}\frac{\pi}{h}}^{\frac{\pi}{h}} \hspace{-0.16cm} U_{\hspace{-0.02cm}h}\hspace{-0.04cm} (\hspace{-0.02cm}e^{i\omega h}\hspace{-0.02cm}) \hspace{-0.04cm}\frac{e^{\hspace{-0.01cm}i\omega kh}}{2\pi} \textnormal{d}\omega. \notag
\end{equation}
These expressions can be exploited so that the discrete-time Fourier transform is written in terms of the continuous-time one, which is known as Poisson's summation formula \cite{boas1972summation}
\begin{equation}
U_h(e^{i\omega h}) = \sum_{n=-\infty}^\infty  U\left(i\omega +i \frac{2\pi n}{h} \right). \notag
\end{equation}
Due to $\{u(t)\}$ being band-limited, this formula indicates that $U_h(e^{i\omega h})=U(i\omega)$ for $|\omega|<\pi/h$. Since $X(i\omega) = G(i\omega) U(i\omega)$, we find that $\{x(t)\}$ is also band-limited, and thus $X_h(e^{i\omega h}) = X(i\omega)$ in the same domain. Using these identities, we can exactly reconstruct a continuous-time band-limited signal based on its samples:
\begin{align}
u(t) &= \frac{1}{2\pi} \int_{-\frac{\pi}{h}}^{\frac{\pi}{h}} U_h(e^{i\omega h}) e^{i\omega t}\textnormal{d}\omega \notag \\
&= \sum_{n=-\infty}^\infty u(nh) \frac{h}{2\pi} \int_{-\frac{\pi}{h}}^{\frac{\pi}{h}} e^{i\omega (t-nh)} \textnormal{d}\omega \notag \\
\label{sincinterpolation}
&= \sum_{n=-\infty}^\infty u(nh) \textnormal{ sinc}\left( \frac{t-nh}{h}\right),
\end{align}
where the sinc function is defined as $\textnormal{sinc}(t):= \sin(\pi t)/(\pi t)$. Replacing this description of $u(t)$ in \eqref{system} and interchanging summation and integration, the system equation can then be rewritten as
\begin{equation}
\label{sampledoutput}
x(t) = \sum_{n=-\infty}^{\infty} u(nh) \int_{0}^{\infty} g(\tau) \textnormal{ sinc}\left( \frac{t-\tau-nh}{h}\right) \textnormal{d}\tau.
\end{equation}
Thus, we have the following result.
\begin{proposition}
	The equivalent discrete-time model of a system whose input is a band-limited signal is described by the impulse response
	\begin{equation}
	\label{gbl}
	g_{\textnormal{BL}}(kh) := \frac{1}{h}\int_{0}^{\infty} g(\tau) \textnormal{ sinc}\left( \frac{kh-\tau}{h}\right) \textnormal{d}\tau, \quad k \in \mathbb{Z}.
	\end{equation}
	Furthermore, the continuous-time frequency response of the system satisfies, for all $|\omega|\leq \omega_B$,
	\begin{equation}
	G(i\omega) = \frac{X(i\omega)}{U(i\omega)} =\frac{X_h(e^{i\omega h})}{U_h(e^{i\omega h})}= h \sum_{k=-\infty}^{\infty} g_{\textnormal{BL}}(kh) e^{-i\omega kh}.  \notag
	\end{equation}
\end{proposition}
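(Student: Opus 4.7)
The plan is to verify the two displayed claims separately, leaning on identities already assembled in the preceding derivation.

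For the first claim, I would evaluate equation \eqref{sampledoutput} at the sampling instants $t=kh$ and absorb the factor $1/h$ from the definition of $g_{\textnormal{BL}}$ into the integral. This yields the discrete-time convolution $x(kh) = h\sum_{n\in\mathbb{Z}} g_{\textnormal{BL}}\!\big((k-n)h\big)\,u(nh)$, exposing $\{g_{\textnormal{BL}}(kh)\}_{k\in\mathbb{Z}}$ (up to the factor $h$ carried in the authors' DTFT convention) as the kernel of the equivalent discrete-time input-output map. Because $g$ is causal but the shifted sinc kernel $\textnormal{sinc}((kh-\tau)/h)$ is supported on all of $\mathbb{R}$, the resulting $g_{\textnormal{BL}}(kh)$ is generically non-zero for $k<0$, which is the non-causality advertised in the introduction.

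For the frequency-response identity, I would chain three equalities. The first, $G(i\omega) = X(i\omega)/U(i\omega)$, is immediate from the LTI model \eqref{system}. The second, $X(i\omega)/U(i\omega) = X_h(e^{i\omega h})/U_h(e^{i\omega h})$ on $|\omega|\le \omega_B$, follows from the Poisson-summation identity already invoked in the text: both $u$ and $x$ are band-limited (the latter because LTI systems preserve band-limitedness), so $U_h(e^{i\omega h})=U(i\omega)$ and $X_h(e^{i\omega h})=X(i\omega)$ on the strictly larger range $|\omega|\le\pi/h$. The substantive step is the last equality, $G(i\omega) = h\sum_k g_{\textnormal{BL}}(kh)\,e^{-i\omega kh}$: I would substitute the definition of $g_{\textnormal{BL}}$, swap the sum and the integral, and reduce the inner sum $\sum_k e^{-i\omega kh}\,\textnormal{sinc}\!\big((kh-\tau)/h\big)$ to $e^{-i\omega\tau}$ by applying the sinc interpolation formula \eqref{sincinterpolation} to the band-limited signal $t\mapsto e^{-i\omega t}$ (band-limited since $|\omega|\le\omega_B<\pi/h$) and evaluating at $t=\tau$, using that $\textnormal{sinc}$ is even. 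What remains is exactly $\int_0^\infty g(\tau)e^{-i\omega\tau}\,\textnormal{d}\tau = G(i\omega)$.

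The main technical point is justifying the interchange of summation and integration in that last step. This can be handled by a Fubini or dominated-convergence argument given that $g\in L^1(0,\infty)$, which is provided by the asymptotic-stability hypothesis on the continuous-time system. Every other step merely reuses identities already derived in the section, so I expect no further obstacles.
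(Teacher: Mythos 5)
Your proposal is correct. The paper gives no formal proof of this proposition --- it is stated as an immediate consequence of the in-text derivation: evaluating \eqref{sampledoutput} at $t=kh$ gives the discrete convolution (your first claim, argued identically), and the frequency-response chain follows from Poisson summation plus the DTFT convolution theorem applied to $x(kh)=h\sum_n g_{\textnormal{BL}}([k-n]h)u(nh)$, which yields $X_h(e^{i\omega h})=\bigl(h\sum_k g_{\textnormal{BL}}(kh)e^{-i\omega kh}\bigr)U_h(e^{i\omega h})$ directly. Your route to the final equality is a genuine (if small) variation: rather than invoking the convolution theorem, you verify $h\sum_k g_{\textnormal{BL}}(kh)e^{-i\omega kh}=G(i\omega)$ by substituting the definition of $g_{\textnormal{BL}}$ and collapsing the inner sum via sinc interpolation of $t\mapsto e^{-i\omega t}$. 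This is self-contained and makes explicit why the identity only holds for $|\omega|\le\omega_B<\pi/h$ (the exponential must itself be reconstructible from its samples), whereas the convolution-theorem route is shorter but leans on the DTFT machinery. One caveat on your closing remark: $g\in L^1$ alone does not license a plain Fubini interchange, because $\sum_k\bigl|\textnormal{sinc}((kh-\tau)/h)\bigr|$ diverges (the terms decay only like $1/|k|$); you would need to work with symmetric partial sums, which are uniformly bounded in $\tau$, and then apply dominated convergence. This is the same level of informality the paper itself adopts when interchanging summation and integration to obtain \eqref{sampledoutput}, so it is not a gap relative to the paper, but the justification as you stated it is not quite the right tool.
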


Interestingly, by \eqref{gbl} we find that the impulse response $\{g_{\textnormal{BL}}(kh)\}_{k\in\mathbb{Z}}$ is non-causal in general. In other words, a causal continuous-time system behaves like a non-causal system when sampled with a band-limited intersample behavior assumption. Intuitively, this can be deduced by how the intersample behavior of the input is formed: by \eqref{sincinterpolation}, we see that the sinc interpolation of the input must take into consideration the contributions of all the future values of the input at the sampling instants. Thus, the system output will be a function of these future input values as well.

An example of this non-causal behavior can be seen in Figure \ref{fig1}, where we have plotted the impulse response $g(t)$ of a second order continuous-time system and the impulse response of its band-limited discrete-time equivalent $g_{\textnormal{BL}}(kh)$. 	
\begin{figure}
	\centering{
		\includegraphics[width=0.47\textwidth]{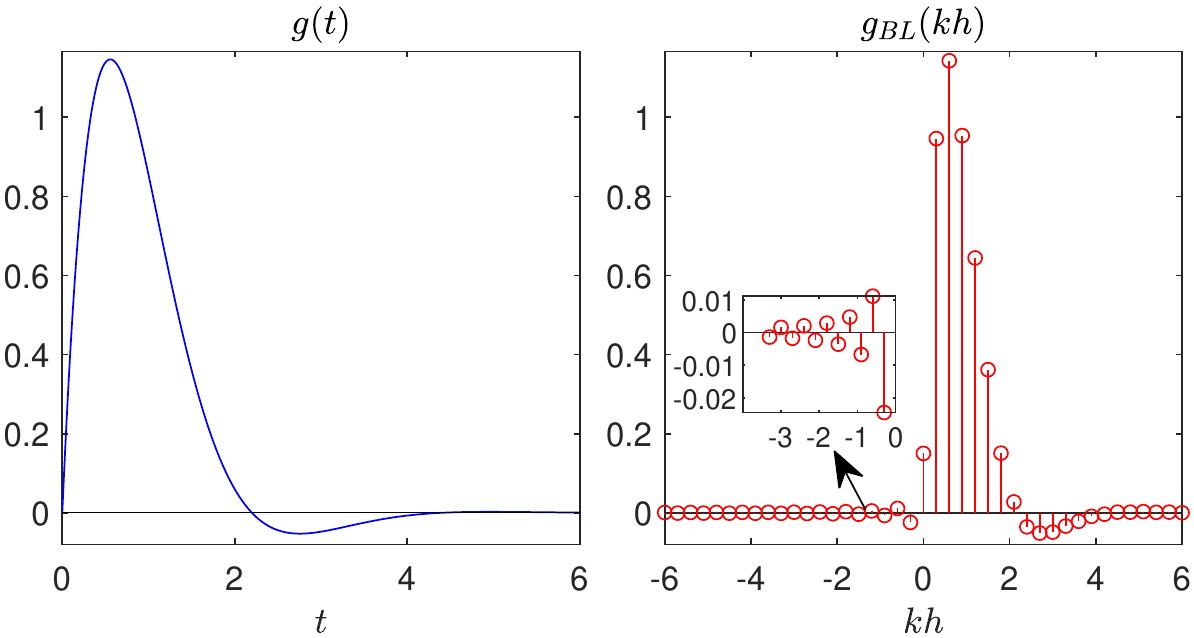}
		\vspace{-0.3cm}
		\caption{Left: Impulse response of a second order continuous-time system. Right: Impulse response of its discrete-time band-limited equivalent.}
		\label{fig1}}
		\vspace{-0.5cm}
\end{figure}
With regards to the behavior of the non-causal part of $\{g_{\textnormal{BL}}(kh)\}_{k\in \mathbb{Z}}$, we note that
\begin{itemize}
	\item
	A significant non-causal part is present if $g(t)$ correlates with $\textnormal{sinc}(t/h-k)$. For example, let $h=1[\textnormal{s}]$ and $g(t)=-e^{-0.2t}\sin(\frac{\pi}{1.1}t)$. As seen in Figure \ref{fig2}, $g(t)$ has an important overlap with the $\textnormal{sinc}$ function, which induces considerable non-causal values in $g_{\textnormal{BL}}(kh)$. Note that such correlation is more likely to occur when the sampling period is close to $\pi/\omega_B$, as in this example.
	\item
	As the sampling period tends to zero, the non-causal part vanishes. In fact, since $h^{-1}\textnormal{sinc}([t-\tau]/h)$ converges weakly \cite{kanwal2011generalized} to $\delta(t-\tau)$, we see that for any fixed $t$ of the form $t=kh$ we have $g_{\textnormal{BL}}(t)\xrightarrow{h\to 0}g(t)$. 
\end{itemize}
\begin{figure}
	\centering{
		\includegraphics[width=0.475\textwidth]{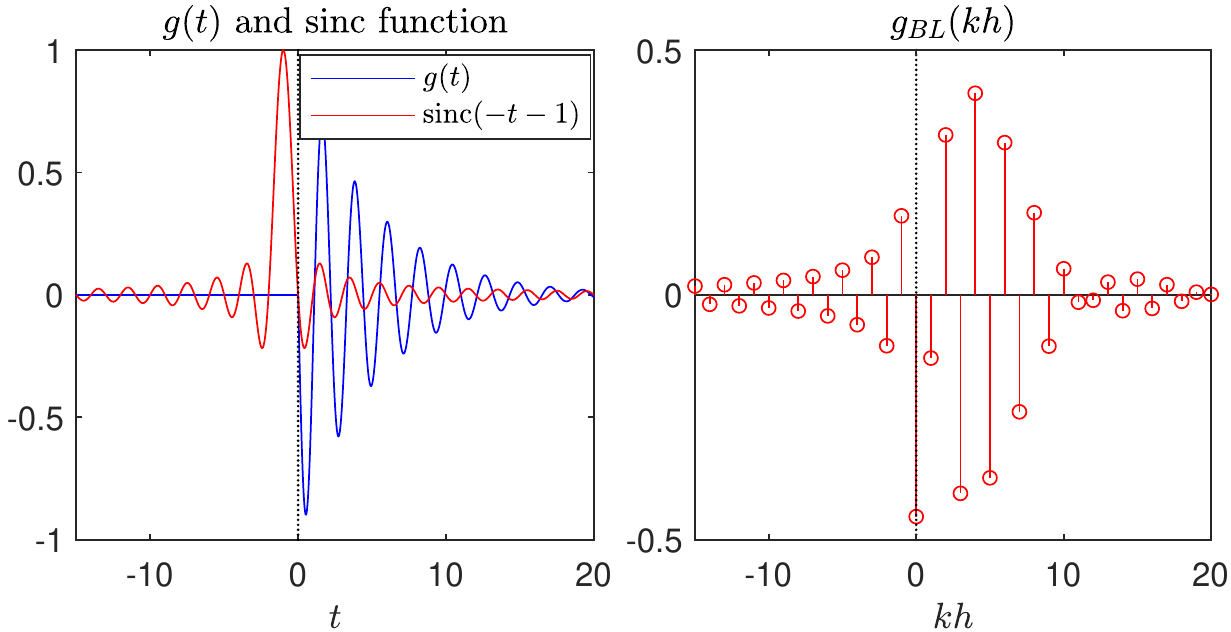}
		\vspace{-0.2cm}
		\caption{Left: Impulse response $g(t)$ with the sinc function that is used for computing $g_{\textnormal{BL}}(-1)$ with \eqref{gbl}. Since the most pronounced lobes are synchronized, the resulting impulse response coefficient is significant. Right: The band-limited equivalent impulse response of $g(t)$.}
		\vspace{-0.2cm}
		\label{fig2}}
\end{figure}		

We now state the problem that is of interest in this paper.
\begin{problem}
	Consider the system described in \eqref{system}, where $\{u(t)\}$ is a band-limited input signal. Assume that we retrieve noisy measurements of the output of the form
	\begin{equation}
	\label{output}
	y(kh) = x(kh) + v(kh), \quad k=1,2,\dots,N,
	\end{equation}
	where $v(kh)$ is a zero-mean stochastic process of variance $\sigma^2$ that is independent of the sampled input. The question we address is how to estimate the continuous-time frequency response $G(i\omega)$ (or equivalently, the band-limited equivalent impulse response $\{g_{\textnormal{BL}}(kh)\}_{k\in\mathbb{Z}}$) from sampled values of the input and noisy output. To avoid further confusion, from now on we denote the true frequency response as $G^*(i\omega)$, and its band-limited equivalent impulse response as $\{g_{\textnormal{BL}}^*(kh)\}_{k\in\mathbb{Z}}$.
	
\end{problem}

\begin{remark}
	It is well known that band-limited signals must extend infinitely in time \cite{lathi2014essentials}. Equivalently, a time-limited signal is not band-limited. Thus, in practice we will encounter \textit{approximately} band-limited signals, which are commonly obtained via anti-aliasing filters \cite{pintelon2012system}. 
\end{remark}
\section{Non-parametric frequency response estimation}
\label{sec:approaches}
In this section we describe our approach for estimating the continuous-time frequency response of a system for band-limited inputs. The sampled output $y(kh)$ can be written as 
\begin{equation}
\label{ykh}
y(kh) = h \sum_{n=-M_{nc}}^{M_c} u([k-n]h)g_{\textnormal{BL}}^*(nh) + w(kh),
\end{equation}
where $M_{nc}$ and $M_{c}$ satisfy $M_{nc}+M_c \geq 0$. These integers indicate the number of non-causal and causal terms of the impulse response that will be estimated. The signal $\{w(kh)\}_{k=1}^N$ is a residual term accounting for the noise sequence $\{v(kh)\}_{k=1}^N$, the approximation error of the series in \eqref{sampledoutput}, and possible transient effects. The equations that the output data satisfies can be put in matrix form as
\begin{equation}
\mathbf{y} = \bm{\Phi}\bm{\rho}^*+\mathbf{w},  \notag
\end{equation}
where
\begin{align}
\mathbf{y} \hspace{-0.06cm}&=\hspace{-0.06cm} \begin{bmatrix}
y(h), \hspace{-0.15cm} & y(2h),  \hspace{-0.2cm} &\dots,  \hspace{-0.15cm} & y(Nh)
\end{bmatrix}^\top, \notag \\
\bm{\Phi} \hspace{-0.06cm}&=\hspace{-0.06cm} h \hspace{-0.12cm}\begin{bmatrix}
u([1\hspace{-0.08cm}+\hspace{-0.08cm}M_{nc}]h\hspace{-0.02cm})\hspace{0.78cm} u(M_{nc}h)  \hspace{0.6cm} \dots \hspace{0.15cm} u([1\hspace{-0.08cm}-\hspace{-0.08cm}M_c]h) \\
u([2\hspace{-0.08cm}+\hspace{-0.08cm}M_{nc}]h) \hspace{0.4cm} u([1\hspace{-0.08cm}+\hspace{-0.08cm}M_{nc}]h) \hspace{0.32cm} \dots \hspace{0.15cm} u([2\hspace{-0.08cm}-\hspace{-0.08cm}M_c]h) \\
\hspace{-0.2cm}\vdots \hspace{2.5cm} \vdots   \hspace{2.7cm}  \vdots   \\
\hspace{-0.04cm}u(\hspace{-0.01cm}[N\hspace{-0.09cm}+\hspace{-0.09cm}M_{nc}]h\hspace{-0.01cm}) \hspace{0.1cm} u(\hspace{-0.01cm}[N\hspace{-0.09cm}-\hspace{-0.09cm}1\hspace{-0.09cm}+\hspace{-0.09cm}M_{nc}]h\hspace{-0.01cm}) \hspace{0.1cm}\dots \hspace{0.1cm} u(\hspace{-0.01cm}[N\hspace{-0.09cm}-\hspace{-0.09cm}M_c]h\hspace{-0.01cm})\hspace{-0.03cm}
\end{bmatrix}\hspace{-0.12cm}, \notag \\
\label{true}
\bm{\rho}^{\hspace{-0.01cm}*} \hspace{-0.06cm}&=\hspace{-0.06cm} \begin{bmatrix}
g_{\textnormal{BL}}^*\hspace{-0.05cm}(\hspace{-0.04cm}-\hspace{-0.02cm}M_{nc}h\hspace{-0.02cm}), \hspace{-0.25cm}&  g_{\textnormal{BL}}^*\hspace{-0.05cm}(\hspace{-0.01cm}[1\hspace{-0.09cm}-\hspace{-0.09cm}M_{nc}]h), \hspace{-0.25cm}& \dots,\hspace{-0.2cm} & \hspace{-0.07cm} g_{\textnormal{BL}}^*\hspace{-0.05cm}(\hspace{-0.02cm}M_c h\hspace{-0.01cm})
\end{bmatrix}^\top\hspace{-0.16cm}, \hspace{-0.2cm}\\
\mathbf{w}\hspace{-0.06cm} &=\hspace{-0.06cm} \begin{bmatrix}
w(h), \hspace{-0.15cm} & w(2h), \hspace{-0.2cm} & \dots,  \hspace{-0.15cm}& w(Nh)
\end{bmatrix}^\top. \notag
\end{align}
As the goal is to provide an estimate for $G^*(i\omega)$, we focus on estimating the vector of coefficients of its truncated Laurent series, $\bm{\rho}^*$. To this end, we will first consider the least-squares estimate of $\bm{\rho}^*$ and study its properties. Afterwards, we present its regularized least-squares variant.

\subsection{Non-causal least-squares estimator}
The least-squares estimate of $\bm{\rho}^*$ is given by
\begin{align}
\hat{\bm{\rho}}_N &= (\bm{\Phi}^\top \bm{\Phi})^{-1} \bm{\Phi}^\top \mathbf{y} \notag \\
\label{lsestimate}
&=\hspace{-0.03cm} \left[\hspace{-0.03cm}\sum_{k=1}^N \bm{\varphi}(kh)\bm{\varphi}^\top(kh)\right]^{\hspace{-0.04cm}-1}\hspace{-0.1cm} \left[\sum_{k=1}^N \bm{\varphi}(kh)y(kh)\hspace{-0.02cm}\right]\hspace{-0.05cm},
\end{align} 
where $\bm{\varphi}\hspace{-0.01cm}(\hspace{-0.01cm}kh\hspace{-0.01cm})$ is the transpose of the $k$-th row of $\bm{\Phi}$. The impulse response estimate generates a non-parametric frequency response estimate $\hat{G}_N(i\omega) = \hat{\bm{\rho}}_N^\top \bm{\Gamma}(e^{i\omega})$, where $\bm{\Gamma}(e^{i\omega})$ is a vector of the form
\begin{equation}
\bm{\Gamma}(e^{i\omega})= h \begin{bmatrix}
e^{i\omega M_{nc} h},  \hspace{-0.2cm} & e^{i\omega (M_{nc}-1) h},  \hspace{-0.2cm}& \dots,  \hspace{-0.2cm} & e^{-i \omega M_c h}
\end{bmatrix}^\top. \notag
\end{equation}
Note that, by construction, the proposed estimate satisfies the conjugacy property
\begin{equation}
\overline{\hat{G}_N(i\omega)} = \hat{\bm{\rho}}_N^\top \overline{\bm{\Gamma}(e^{i\omega})} = \hat{\bm{\rho}}_N^\top \bm{\Gamma}(e^{-i\omega}) = \hat{G}_N(-i\omega). \notag
\end{equation}
To analyze this estimator, we consider inputs of the form
\begin{equation}
\label{inputthm34}
u(t) = \sum_{n=1}^N e(nh) \textnormal{sinc}\left(\frac{t-nh}{h}\right),
\end{equation}
where $\{e(nh)\}_{n=1}^N$ is a white noise sequence of finite variance. For practical purposes, we shall consider that the non-causal samples of the input are all equal to zero (i.e., $u(kh)=0$ for $k<0$). Note that this does not mean that the continuous-time input to the system is causal, as every band-limited signal must extend infinitely in both directions in time.

The following results concern the consistency and asymptotic distribution of the least-squares estimator in \eqref{lsestimate} when the input is discrete-time white noise interpolated through sinc functions. The proofs can be found in the Appendix.

\begin{theorem}
	\label{thm31}
	Consider the system \eqref{system} with measured output \eqref{output}, and $\{u(t)\}$ given by \eqref{inputthm34}. Then, for~any integers $M_{nc}, M_c$ such that $M_{nc}+M_c \geq 0$, we have $\hat{\bm{\rho}}_N \xrightarrow{a.s.} \bm{\rho}^*$, where $\hat{\bm{\rho}}_N$ and $\bm{\rho}^*$ are defined in \eqref{lsestimate} and \eqref{true} respectively.
\end{theorem}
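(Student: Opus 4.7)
The plan is to follow the standard route for almost-sure consistency of a least-squares estimator: decompose the error as
\begin{equation*}
\hat{\bm{\rho}}_N - \bm{\rho}^* = \left(\frac{1}{N}\bm{\Phi}^\top\bm{\Phi}\right)^{-1}\frac{1}{N}\bm{\Phi}^\top\mathbf{w},
\end{equation*}
and show that $\frac{1}{N}\bm{\Phi}^\top\bm{\Phi}$ converges almost surely to an invertible deterministic matrix while $\frac{1}{N}\bm{\Phi}^\top\mathbf{w}$ converges almost surely to zero. The continuous mapping theorem then delivers the claim.

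For the first limit, the essential observation is that at sampling instants the sinc interpolation \eqref{inputthm34} collapses, by orthogonality of integer-shifted sinc kernels, to $u(kh)=e(kh)$ for $k\in\{1,\dots,N\}$ and $u(kh)=0$ otherwise. The $(i,j)$-th entry of $\frac{1}{N}\bm{\Phi}^\top\bm{\Phi}$ is therefore a normalized time-average of products $u([k+p_i]h)u([k+p_j]h)$ for fixed integer shifts $p_i,p_j$, which by the ergodic theorem applied to the i.i.d.\ sequence $\{e(kh)\}$ converges almost surely to $h^2\sigma_e^2\delta_{ij}$; the boundary terms, where one of the shifted indices falls outside $\{1,\dots,N\}$, contribute $O(\max\{|p_i|,|p_j|\}/N)$ and thus vanish in the limit. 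Hence the first factor converges a.s.\ to $(h^2\sigma_e^2)^{-1}\mathbf{I}$.

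The residual $w(kh)$ splits into (i) the measurement noise $v(kh)$, (ii) the tail of the Laurent series $h\sum_{n\notin[-M_{nc},M_c]} u([k-n]h)\,g_{\textnormal{BL}}^*(nh)$ containing the impulse-response coefficients omitted from the model, and (iii) finitely many boundary transients. The noise contribution $\frac{1}{N}\sum_{k=1}^N\bm{\varphi}(kh)v(kh)$ vanishes almost surely by the SLLN, exploiting independence of $v$ from $u$ and zero mean. For the tail term, the key observation is that every cross product $u([k-n_i]h)u([k-n]h)$ entering the computation has $n_i\in[-M_{nc},M_c]$ and $n\notin[-M_{nc},M_c]$, so the two factors are independent and the product is zero-mean; combined with the $\ell^2$-summability of $\{g_{\textnormal{BL}}^*(nh)\}$ (which follows from $G^*$ being bounded on $[-\omega_B,\omega_B]$ via Parseval), this yields almost-sure convergence to zero. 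Term (iii) contributes only $O(1)$ entries per coordinate and therefore vanishes after division by $N$.

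I expect the main obstacle to be the rigorous justification of the tail term (ii), where the summation over $n$ is a priori infinite and the finite support of $u$ couples the tail with $N$. The cleanest way forward is to truncate the tail at some level $L$, apply the ergodic theorem for each fixed $L$, and then dominate the remainder using the $\ell^2$ bound together with a Borel--Cantelli argument on the fourth moments of the noise; alternatively, one can embed the problem into a strictly stationary ergodic framework and invoke Birkhoff's theorem, treating the effect of restricting the input to $k\in\{1,\dots,N\}$ as a separate $O(1/N)$ correction.
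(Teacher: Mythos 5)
Your proposal is correct and follows essentially the same route as the paper: the same error decomposition, the whiteness of $u$ at the sampling instants to make the normal matrix converge to a positive scalar multiple of the identity, and the vanishing cross-correlation between the regressor and the omitted tail (justified via square-summability of $\{g_{\textnormal{BL}}^*(kh)\}$ from asymptotic stability). The technical obstacle you flag --- almost-sure convergence of the infinite-tail time average --- is exactly what the paper dispatches by invoking the ergodicity lemma of S\"{o}derstr\"{o}m (Lemma 3.1 of the cited reference), which internally performs the truncation-plus-moment argument you sketch.
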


\begin{remark}
	Theorem \ref{thm31} also shows that when causal FIR models are being fit to data with a band-limited input, the coefficients that are estimated converge to the ones provided by the band-limited equivalent, and not the zero-order hold one, which is commonly assumed when discrete-time data is obtained. This fact has implications on the accuracy of the model, as the band-limited equivalent has non-causal coefficients that are different from zero but are usually left unmodeled. As mentioned in Section \ref{sec:preliminaries}, these non-causal terms may only be neglected if the sampling period is small.
\end{remark}

\begin{theorem}
	\label{thm32}
	Consider the system \eqref{system} with measured output \eqref{output}, where the input is given by \eqref{inputthm34}, and $\{e(nh)\}_{n=1}^N$ is white noise of variance $\lambda^2$ that is independent of the output noise sequence $\{v(kh)\}_{k=1}^N$. Then, the least-squares estimate $\hat{\bm{\rho}}_N$ in \eqref{lsestimate} is asymptotically Gaussian distributed, i.e.,
	\begin{equation}
	\sqrt{N}(\hat{\bm{\rho}}_N-\bm{\rho}^*) \xrightarrow{dist.} \mathcal{N}(\mathbf{0},\mathbf{P}_{\textnormal{LS}}), \notag
	\end{equation}
	where the asymptotic covariance matrix is given by
	\begin{equation}
	\mathbf{P}_{\textnormal{LS}}\hspace{-0.05cm}:=\hspace{-0.05cm}\lim_{N\to \infty} \hspace{-0.1cm}\frac{1}{h^4 \lambda^4 N} \hspace{-0.1cm} \sum_{k=1}^N \sum_{n=1}^N \hspace{-0.05cm}\mathbb{E}\{ \hspace{-0.02cm}\bm{\varphi}(kh)w(kh)w(nh) \bm{\varphi}^\top(nh)\hspace{-0.02cm}\}. \notag
	\end{equation}
\end{theorem}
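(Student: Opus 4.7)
The plan is to exploit the standard sandwich representation
\[
\sqrt{N}(\hat{\bm{\rho}}_N - \bm{\rho}^*) = \left(\frac{1}{N}\bm{\Phi}^\top\bm{\Phi}\right)^{-1} \frac{1}{\sqrt{N}}\bm{\Phi}^\top \mathbf{w},
\]
which follows from $\mathbf{y} = \bm{\Phi}\bm{\rho}^* + \mathbf{w}$ and the definition of $\hat{\bm{\rho}}_N$ in \eqref{lsestimate}. The task then splits into (i) identifying the probability limit of the Gram matrix $\bm{\Phi}^\top\bm{\Phi}/N$, (ii) establishing a CLT for the score $\bm{\Phi}^\top\mathbf{w}/\sqrt{N}$, and (iii) applying Slutsky's theorem to combine them.

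For step (i), the key observation is that evaluating \eqref{inputthm34} at $t=kh$ gives $u(kh) = \sum_{n=1}^N e(nh)\,\mathrm{sinc}(k-n) = e(kh)$ for $k \in \{1,\dots,N\}$, since the sinc function vanishes at nonzero integers. Thus the entries of $\bm{\Phi}/h$ are (up to transient effects near the boundary $k=1$ and $k=N$) white noise samples with variance $\lambda^2$. By the strong law of large numbers applied to the entries of $\bm{\Phi}^\top\bm{\Phi}/N$, together with the argument already used in the proof of Theorem~\ref{thm31}, we obtain $\frac{1}{N}\bm{\Phi}^\top\bm{\Phi} \xrightarrow{a.s.} h^2\lambda^2 \mathbf{I}$. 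Inverting gives the factor $(h^2\lambda^2)^{-2} = h^{-4}\lambda^{-4}$ that appears in $\mathbf{P}_{\textnormal{LS}}$.

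For step (ii), write $\frac{1}{\sqrt{N}}\bm{\Phi}^\top\mathbf{w} = \frac{1}{\sqrt{N}}\sum_{k=1}^N \bm{\varphi}(kh)w(kh)$. The residual $w(kh)$ decomposes into three pieces: the measurement noise $v(kh)$, which is independent of the regressors and has variance $\sigma^2$; a truncation-in-lag piece coming from the coefficients $g_{\textnormal{BL}}^*(nh)$ with $n \notin [-M_{nc}, M_c]$, and a truncation-in-data piece coming from the fact that the sum \eqref{ykh} ideally ranges over all $n \in \mathbb{Z}$ but we have only observed $u(nh)$ for $n\in\{1,\dots,N\}$. The first piece gives a martingale-difference sequence with respect to the natural filtration generated by $\{e(jh),v(jh)\}$, and the latter two give linear functionals of finitely many independent white-noise samples whose second moments are summable in the lag (since $\sum_n |g_{\textnormal{BL}}^*(nh)|^2 < \infty$). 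A martingale CLT (or a CLT for asymptotically-stationary linear processes) then yields asymptotic normality of the score with covariance
\[
\lim_{N\to\infty}\frac{1}{N}\sum_{k,n=1}^N \mathbb{E}\{\bm{\varphi}(kh)w(kh)w(nh)\bm{\varphi}^\top(nh)\}.
\]
Slutsky's theorem combines this limit with the inverse Gram matrix limit to produce exactly $\mathbf{P}_{\textnormal{LS}}$.

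The step I expect to be delicate is verifying the CLT for the score, because $w(kh)$ is \emph{not} uncorrelated with $\bm{\varphi}(kh)$: both depend on the white-noise samples $\{e(jh)\}$. One must therefore carefully separate the genuinely independent component $v(kh)$ (which contributes cleanly through $\sigma^2$-scaled terms) from the input-driven residual components, and then argue that the latter form a short-range-dependent sequence for which the Lindeberg/Levy condition or the martingale CLT of Billingsley applies uniformly in $N$. Boundary effects from the sums at $k=1$ and $k=N$ are routine and contribute $O(1/\sqrt{N})$ correction terms that vanish in the limit.
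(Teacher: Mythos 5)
Your proposal follows essentially the same route as the paper: the sandwich decomposition, the observation that $u(kh)=e(kh)$ so that $\frac{1}{N}\bm{\Phi}^\top\bm{\Phi}\to h^2\lambda^2\mathbf{I}$ (whence the $h^{-4}\lambda^{-4}$ factor), a CLT for $\frac{1}{\sqrt{N}}\sum_k\bm{\varphi}(kh)w(kh)$, and Slutsky --- the only difference being that the paper outsources the CLT to Lemmas A4.1--A4.2 of S\"oderstr\"om--Stoica while you sketch the martingale/linear-process argument yourself. One small correction: $\bm{\varphi}(kh)$ and $w(kh)$ \emph{are} uncorrelated (as shown in the proof of Theorem~\ref{thm31}, which is why the limit is centered at $\bm{\rho}^*$); the genuine delicacy is that the sequence $\{\bm{\varphi}(kh)w(kh)\}_k$ is dependent across $k$, which is exactly what the double sum in $\mathbf{P}_{\textnormal{LS}}$ accounts for.
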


\subsection{Non-causal regularized least-squares estimator}

The proposed least-squares estimator has been shown to be consistent in Theorem \ref{thm31}. However, usually the practitioner is interested in the finite-time behavior, where the number of parameters to be estimated can be of the order of the number of samples. Another situation that may occur is that the number of parameters is larger than the persistence of excitation order of the input signal. In both of these cases, it is convenient to use regularized least-squares estimators.

The regularized least-squares estimate of $\bm{\rho}^*$, denoted here by $\hat{\bm{\rho}}_N^{r},$ is given by
\begin{align}
\hat{\bm{\rho}}_N^{r} &= \arg \min_{\bm{\rho}} \|\mathbf{y}-\bm{\Phi}\bm{\rho}\|_2^2 + \gamma \bm{\rho}^\top \mathbf{P}_r^{-1}\bm{\rho}  \notag \\
\label{regularized}
&= (\mathbf{P}_r\bm{\Phi}^\top \bm{\Phi} + \gamma \mathbf{I}_{M_{nc}+M_c+1})^{-1} \mathbf{P}_r\bm{\Phi}^\top \mathbf{y}, 
\end{align}
where $\mathbf{P}_r\succeq 0$ is a regularization matrix and $\gamma$ is a positive scalar. The problem of choosing the best regularization matrix for causal FIR models has been thoroughly studied during the past years \cite{pillonetto2014kernel,ljung2020shift}. This problem is challenging, since it is known that the optimal regularization matrix depends on the true system \cite{chen2012estimation}. An analogous result holds for non-causal FIR models, as stated next.
\begin{proposition}
	Consider the system with sampled output as in \eqref{ykh}, and assume $\{w(kh)\}$ is white noise of variance $\sigma^2$. The regularization term that minimizes the MSE matrix in~a positive definite sense is given by $\gamma^{\textnormal{opt}}\hspace{-0.05cm}=\hspace{-0.05cm}\sigma^2$ and $\mathbf{P}_r^{\textnormal{opt}}\hspace{-0.04cm} =\hspace{-0.04cm} \bm{\rho}^* \hspace{-0.05cm}{\bm{\rho}^*}^{\hspace{-0.05cm}\top}\hspace{-0.04cm}$, and the corresponding optimal regularized estimate is
	\begin{equation}
	\label{oracle}
	\hat{\bm{\rho}}_N^{r,\textnormal{opt}} \hspace{-0.06cm}=\hspace{-0.06cm} (\bm{\rho}^{\hspace{-0.01cm}*} {\bm{\rho}^{\hspace{-0.01cm}*}}^{\hspace{-0.03cm}\top} \hspace{-0.04cm} \bm{\Phi}^{\hspace{-0.02cm}\top} \bm{\Phi} + \sigma^2 \mathbf{I}_{M_{nc}+M_c+1})^{-1} \bm{\rho}^{\hspace{-0.01cm}*} {\bm{\rho}^{\hspace{-0.01cm}*}}^{\hspace{-0.03cm}\top} \hspace{-0.04cm} \bm{\Phi}^{\hspace{-0.02cm}\top} \mathbf{y}.
	\end{equation}
\end{proposition}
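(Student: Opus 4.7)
The plan is to reduce the proposition to a classical MMSE argument: parametrize the class of linear estimators that the regularized LS scheme spans, complete the square to find the Loewner-optimal one, and then verify that the candidate $(\mathbf{P}_r^{\textnormal{opt}},\gamma^{\textnormal{opt}})$ in the statement realizes it. Writing $\hat{\bm{\rho}}_N^r = \mathbf{A}\mathbf{y}$ with $\mathbf{A} := (\mathbf{P}_r\bm{\Phi}^\top\bm{\Phi} + \gamma\mathbf{I})^{-1}\mathbf{P}_r\bm{\Phi}^\top$, the family of regularized LS estimators lies inside the family of all linear estimators $\mathbf{A}\mathbf{y}$. It therefore suffices to minimize the MSE matrix over all $\mathbf{A}$ in the Loewner order and then exhibit a $(\mathbf{P}_r,\gamma)$ that reproduces the minimizer.

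For the minimization step, I would use $\mathbf{y} = \bm{\Phi}\bm{\rho}^* + \mathbf{w}$ with $\mathbb{E}[\mathbf{w}] = 0$ and $\mathbb{E}[\mathbf{w}\mathbf{w}^\top] = \sigma^2 \mathbf{I}_N$ to expand
\begin{equation*}
\textnormal{MSE}(\mathbf{A}) = (\mathbf{A}\mathbf{z} - \bm{\rho}^*)(\mathbf{A}\mathbf{z} - \bm{\rho}^*)^\top + \sigma^2 \mathbf{A}\mathbf{A}^\top,
\end{equation*}
where $\mathbf{z} := \bm{\Phi}\bm{\rho}^*$, and then introduce $\mathbf{B} := \mathbf{z}\mathbf{z}^\top + \sigma^2 \mathbf{I}_N \succ 0$ and complete the square to write $\textnormal{MSE}(\mathbf{A}) = (\mathbf{A} - \bm{\rho}^*\mathbf{z}^\top \mathbf{B}^{-1})\mathbf{B}(\mathbf{A} - \bm{\rho}^*\mathbf{z}^\top \mathbf{B}^{-1})^\top + \mathbf{C}$, with $\mathbf{C}$ independent of $\mathbf{A}$. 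Since $\mathbf{B} \succ 0$, the Loewner-minimum is attained at $\mathbf{A}^{\textnormal{opt}} := \bm{\rho}^*{\bm{\rho}^*}^\top\bm{\Phi}^\top(\bm{\Phi}\bm{\rho}^*{\bm{\rho}^*}^\top\bm{\Phi}^\top + \sigma^2\mathbf{I}_N)^{-1}$, and for any other $\mathbf{A}$ the excess $\textnormal{MSE}(\mathbf{A}) - \textnormal{MSE}(\mathbf{A}^{\textnormal{opt}})$ is positive semidefinite.

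Finally, the push-through identity $(\mathbf{M}\mathbf{N} + \alpha\mathbf{I})^{-1}\mathbf{M} = \mathbf{M}(\mathbf{N}\mathbf{M} + \alpha\mathbf{I})^{-1}$, applied with $\mathbf{M} = \bm{\rho}^*{\bm{\rho}^*}^\top\bm{\Phi}^\top$, $\mathbf{N} = \bm{\Phi}$ and $\alpha = \sigma^2$, recasts $\mathbf{A}^{\textnormal{opt}}$ as $(\bm{\rho}^*{\bm{\rho}^*}^\top\bm{\Phi}^\top\bm{\Phi} + \sigma^2\mathbf{I})^{-1}\bm{\rho}^*{\bm{\rho}^*}^\top\bm{\Phi}^\top$, matching the coefficient matrix of $\mathbf{y}$ in \eqref{oracle} with $\mathbf{P}_r = \bm{\rho}^*{\bm{\rho}^*}^\top$ and $\gamma = \sigma^2$. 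The only delicate point, and the one I expect to be the main obstacle, is that $\mathbf{P}_r^{\textnormal{opt}} = \bm{\rho}^*{\bm{\rho}^*}^\top$ is rank-one, so $\mathbf{P}_r^{-1}$ does not literally exist in the variational form of $\hat{\bm{\rho}}_N^r$. I would resolve this by working throughout with the closed-form expression in \eqref{regularized}, which involves only $\mathbf{P}_r$ and never $\mathbf{P}_r^{-1}$, or equivalently by a limiting argument using $\mathbf{P}_r = \bm{\rho}^*{\bm{\rho}^*}^\top + \varepsilon\mathbf{I}$ as $\varepsilon \downarrow 0$; everything else is routine algebra.
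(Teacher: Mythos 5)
Your argument is correct, and it closes the proof that the paper itself leaves entirely to the literature: the paper's ``proof'' is a one-line reference to Theorem 1 of Chen, Ohlsson and Ljung (2012), whose argument stays inside the regularized-least-squares family and directly verifies $\textnormal{MSE}(\mathbf{P}_r)\succeq \textnormal{MSE}(\bm{\rho}^*{\bm{\rho}^*}^\top)$ by algebraic manipulation of the two MSE matrices. You instead enlarge the feasible set to \emph{all} linear estimators $\mathbf{A}\mathbf{y}$, complete the square against $\mathbf{B}=\bm{\Phi}\bm{\rho}^*{\bm{\rho}^*}^\top\bm{\Phi}^\top+\sigma^2\mathbf{I}_N\succ 0$ to get a genuine Loewner minimizer $\mathbf{A}^{\textnormal{opt}}=\bm{\rho}^*{\bm{\rho}^*}^\top\bm{\Phi}^\top\mathbf{B}^{-1}$, and then use the push-through identity to show this minimizer is realized by \eqref{regularized} with $\mathbf{P}_r=\bm{\rho}^*{\bm{\rho}^*}^\top$ and $\gamma=\sigma^2$. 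The algebra checks out (the cross terms cancel correctly, $\mathbf{M}\mathbf{N}+\sigma^2\mathbf{I}$ is invertible since $\mathbf{M}\mathbf{N}$ is a product of positive semidefinite matrices, and optimality over the larger class trivially implies optimality over the subfamily). What your route buys is a strictly stronger conclusion --- the oracle is optimal among all linear estimators, not merely among regularized least-squares ones --- plus a clean way around the fact that $\mathbf{P}_r^{\textnormal{opt}}$ is rank one, since you never need $\mathbf{P}_r^{-1}$; you are right to flag that the variational form in \eqref{regularized} is ill-posed for singular $\mathbf{P}_r$ and that either the closed form or an $\varepsilon\mathbf{I}$ perturbation resolves it. The one cosmetic point worth a sentence in a final write-up is that the estimator depends on $(\mathbf{P}_r,\gamma)$ only through the ratio $\mathbf{P}_r/\gamma$, so the stated optimal pair is optimal but not unique.
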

\begin{proof}
	The proof follows by the same reasoning as in the proof of Theorem 1 of \cite{chen2012estimation}.
\end{proof}
Since the optimal regularization matrix is not known a priori, the matrix $\mathbf{P}_r$ is typically parameterized by a low-dimensional hyperparameter vector $\bm{\beta}\in \mathcal{B}$ according to what can be assumed about the impulse response. The difference between the estimation problem in this work and causal FIR estimation is that the impulse response has a causal and non-causal exponential decay, which induces changes in the way kernels should be designed. Recently \cite{blanken2020kernel}, the kernel-design problem for general non-causal systems was studied with the goal of identifying systems with feedforward control. Here we recall the findings in \cite{blanken2020kernel} and apply them to our context. 

Let
\begin{equation}
b_k = \begin{cases}
\lambda_{nc}^{-2k} & \textnormal{if } k<0 \\
\lambda_{c}^{2k} & \textnormal{if } k\geq 0,
\end{cases} \notag
\end{equation}
with $0\leq \lambda_{nc}, \lambda_c<1$. The non-causal tuned/correlated (TC) kernel and the second-order stable spline (SS) kernel yield the following regularization matrices:
\begin{flalign}
&\textit{TC kernel:} \hspace{0.1cm} \mathbf{P}_{\hspace{-0.04cm}r,(j,l)}\hspace{-0.03cm}(\bm{\beta})\hspace{-0.07cm} = \hspace{-0.07cm}\alpha\hspace{-0.03cm} \min\{b_{j-M_{nc}-1}, b_{l-M_{nc}-1}\}; && \notag \\
&\bm{\beta} \hspace{-0.04cm}= \hspace{-0.04cm}[\lambda_{nc},\hspace{0.05cm}\lambda_{c},\hspace{0.05cm} \alpha]^\top; \hspace{0.1cm} \mathcal{B} \hspace{-0.04cm}=\hspace{-0.04cm} \{\bm{\beta}\in\mathbb{R}^3\colon \alpha \hspace{-0.02cm}>\hspace{-0.02cm} 0, \hspace{0.03cm} 0 \hspace{-0.02cm}\leq \hspace{-0.02cm} \lambda_{nc},\hspace{0.02cm}\lambda_{c} \hspace{-0.02cm}< \hspace{-0.02cm}1\}. \notag \\
&\textit{SS kernel:} \hspace{0.1cm} \mathbf{P}_{\hspace{-0.04cm}r,(j,l)}\hspace{-0.03cm}(\bm{\beta})\hspace{-0.08cm} = \hspace{-0.08cm}\frac{\alpha}{6}\hspace{-0.05cm} \min\{b_{j-M_{nc}-1}, b_{l-M_{nc}-1}\}^2 \times && \notag \\
& \left(\hspace{-0.01cm}3\hspace{-0.02cm}\max\{b_{j\hspace{-0.02cm}-\hspace{-0.02cm}M_{nc}\hspace{-0.02cm}-\hspace{-0.02cm}1}, b_{l\hspace{-0.02cm}-\hspace{-0.02cm}M_{nc}\hspace{-0.02cm}-\hspace{-0.02cm}1}\hspace{-0.03cm}\}\hspace{-0.05cm}-\hspace{-0.05cm}\min\{b_{j\hspace{-0.02cm}-\hspace{-0.02cm}M_{nc}\hspace{-0.02cm}-\hspace{-0.02cm}1}, b_{l\hspace{-0.02cm}-\hspace{-0.02cm}M_{nc}\hspace{-0.02cm}-\hspace{-0.02cm}1}\hspace{-0.01cm}\}\hspace{-0.01cm}\right)\hspace{-0.02cm}; && \notag \\
&\bm{\beta} \hspace{-0.04cm}= \hspace{-0.04cm}[\lambda_{nc},\hspace{0.05cm}\lambda_{c},\hspace{0.05cm} \alpha]^\top; \hspace{0.1cm} \mathcal{B} \hspace{-0.04cm}=\hspace{-0.04cm} \{\bm{\beta}\in\mathbb{R}^3\colon \alpha \hspace{-0.02cm}>\hspace{-0.02cm} 0, \hspace{0.03cm} 0 \hspace{-0.02cm}\leq \hspace{-0.02cm} \lambda_{nc},\hspace{0.02cm}\lambda_{c} \hspace{-0.02cm}< \hspace{-0.02cm}1\}.\notag 
\end{flalign}
\begin{remark}
	Using the identities $2\min(a,b)=a+b-|a-b|$ and $2\max(a,b)=a+b+|a-b|$, it can be shown that the kernels above are equivalent to the standard TC and SS causal kernels if $\lambda_{nc}$ is set to zero \cite{blanken2020kernel}.
\end{remark}
To compute the regularized estimator in \eqref{regularized}, all that is left to know is how to tune the hyperparameters in $\bm{\beta}$. This can be done using marginal likelihood optimization with respect to the data, as in the causal estimation case. In other words,
\begin{align}
\hat{\bm{\beta}}_{\textnormal{ML}} &= \arg \max_{\bm{\beta}\in \mathcal{B}} \log(p(\mathbf{y}| \bm{\beta})) \notag \\
&= \arg \min_{\bm{\beta}\in \mathcal{B}} \mathbf{y}^\top \left[\mathbf{Z}(\bm{\beta})\right]^{-1}\mathbf{y} + \log \det(\mathbf{Z}(\bm{\beta})), \notag
\end{align}
where $\mathbf{Z}(\bm{\beta}) := \bm{\Phi}\mathbf{P}_r(\bm{\beta})\bm{\Phi}^\top + \sigma^2 \mathbf{I}_N$. Regarding the variance $\sigma^2$, we provide the computational considerations that must be taken place for including it as a hyperparameter\footnote{As reported in Remark 5 of \cite{pillonetto2014kernel},  $\sigma^2$ can also be estimated separately by computing the sample variance that results from ARX or FIR modeling.}, most of which are included in \cite{chen2013implementation}.

First, we must factor the regularization matrix as $\mathbf{P}_r(\bm{\beta})/\sigma^2 = \mathbf{L(\bm{\beta})L^\top(\bm{\beta})}$. Afterwards, we consider the thin QR factorization \cite[Theorem 2.1.14]{Horn2012}
\begin{equation}
\begin{bmatrix}
\bm{\Phi} \mathbf{L}(\bm{\beta}) & \mathbf{y} \\
\mathbf{I}_{M_{nc}+M_c+1} & \mathbf{0} 
\end{bmatrix} = \mathbf{Q}(\bm{\beta}) \begin{bmatrix}
\mathbf{R}_1(\bm{\beta}) & \mathbf{R}_2 (\bm{\beta})\\
\mathbf{0} & r(\bm{\beta})
\end{bmatrix}, \notag
\end{equation}
where $\mathbf{Q}(\bm{\beta})$ is a rectangular orthogonal matrix, $r(\bm{\beta})$ is a scalar greater than zero, and $\mathbf{R}_1(\bm{\beta})\in \mathbb{R}^{(M_{nc}+M_c+1)\times (M_{nc}+M_c+1)}$ is an upper triangular matrix with positive diagonal entries. Note that the following identities are satisfied:
\begin{align}
\label{qr1}
\hspace{-0.2cm}\mathbf{R}_1^{\hspace{-0.03cm}\top}\hspace{-0.05cm}(\bm{\beta}\hspace{-0.01cm}) \mathbf{R}_1(\bm{\beta}\hspace{-0.01cm}) \hspace{-0.07cm} &= \hspace{-0.07cm} \mathbf{L}^{\hspace{-0.06cm}\top}\hspace{-0.07cm}(\bm{\beta}\hspace{-0.01cm}) \bm{\Phi}^{\hspace{-0.05cm}\top}\hspace{-0.03cm} \bm{\Phi}\mathbf{L}(\bm{\beta}\hspace{-0.01cm})\hspace{-0.07cm} +\hspace{-0.07cm} \mathbf{I}_{M_{nc} \hspace{-0.02cm}+ \hspace{-0.02cm}M_c\hspace{-0.02cm}+\hspace{-0.02cm}1},  \\
\label{qr2}
\hspace{-0.2cm}\mathbf{R}_1^{\hspace{-0.03cm}\top}\hspace{-0.05cm}(\bm{\beta}\hspace{-0.01cm}) \mathbf{R}_2(\bm{\beta}\hspace{-0.01cm}) \hspace{-0.07cm} &=  \hspace{-0.07cm} \mathbf{L}^{\hspace{-0.06cm}\top}\hspace{-0.07cm}(\bm{\beta}\hspace{-0.01cm}) \bm{\Phi}^{\hspace{-0.05cm}\top}\hspace{-0.03cm} \mathbf{y},  \\
\hspace{-0.2cm}\mathbf{R}_2^{\hspace{-0.03cm}\top}\hspace{-0.05cm}(\bm{\beta}\hspace{-0.01cm}) \mathbf{R}_2(\bm{\beta}\hspace{-0.01cm})\hspace{-0.04cm}+\hspace{-0.04cm}r(\bm{\beta})^2 \hspace{-0.07cm} &=\hspace{-0.07cm} \mathbf{y}^\top \hspace{-0.03cm} \mathbf{y}. \notag
\end{align}
After some technical derivations, the equalities above lead to expressing the log-likelihood cost as
\small
\begin{equation}
\mathbf{y}^{\hspace{-0.04cm}\top} \hspace{-0.04cm}[\mathbf{Z}(\bm{\beta})]^{\hspace{-0.03cm}-\hspace{-0.02cm}1}\hspace{-0.04cm}\mathbf{y} + \log\hspace{-0.03cm}\det(\hspace{-0.02cm}\mathbf{Z}(\bm{\beta})\hspace{-0.02cm}) \hspace{-0.095cm} = \hspace{-0.095cm}\frac{r^{\hspace{-0.01cm}2}\hspace{-0.05cm}(\bm{\beta})}{\sigma^2} \hspace{-0.02cm}+ \log(\hspace{-0.015cm}\sigma^{\hspace{-0.01cm}2\hspace{-0.01cm}N}\hspace{-0.045cm})\hspace{-0.01cm}+2\hspace{-0.03cm}\log \hspace{-0.03cm}\det(\hspace{-0.02cm}\mathbf{R}_{\hspace{-0.01cm}1}\hspace{-0.04cm}(\bm{\beta})\hspace{-0.02cm}). \notag
\end{equation}
\normalsize
Since the TC and SS regularization matrices are already factored by a scalar constant $\alpha$, the dependence on $\sigma$ in $\mathbf{L}$ (and therefore in $\mathbf{R}_1$) is redundant for the optimization of the marginal likelihood with respect to $\bm{\beta}$ and $\sigma^2$. Therefore, we can concentrate the cost function by minimizing the log-likelihood cost with respect to $\sigma^2$, which leads to
\begin{equation}
\label{findbeta}
\hat{\bm{\beta}}_{\textnormal{ML}} = \arg \min_{\bm{\beta}\in \mathcal{B}} N\log(r(\bm{\beta})) + \log \det(\mathbf{R}_1(\bm{\beta})).
\end{equation}  
Finally, thanks to \eqref{qr1} and \eqref{qr2}, the regularized least-squares estimate can be computed by
\begin{equation}
\hat{\bm{\rho}}_N^{r} = \mathbf{L}(\hat{\bm{\beta}}_{\textnormal{ML}}) \left[\mathbf{R}_1(\hat{\bm{\beta}}_{\textnormal{ML}})\right]^{-1} \mathbf{R}_2(\hat{\bm{\beta}}_{\textnormal{ML}}). \notag
\end{equation}

\section{Simulations}
\label{sec:simulations}
In this section we illustrate the proposed non-causal estimators though simulation examples, and later verify their advantages through tests on random systems. 

\subsection{Two examples}
\label{sec:simulations_a}
We first consider the following two systems:
\begin{equation}
G_1^*(p) = \frac{1.25}{0.25p^2 + 0.7p+1}, \hspace{0.3cm} G_2^*(p) = \frac{-\pi/1.1}{(p+0.2)^2 + \frac{\pi^2}{1.1^2}},  \notag 
\end{equation}
where $p$ is the differentiation operator, i.e., $px(t) = \frac{dx(t)}{dt}$. These systems have been used for generating the band-limited equivalent impulse responses in Figures \ref{fig1} and \ref{fig2}. The sampling periods are $h_1 = 0.3$[s] and $h_2=1$[s] respectively, and the inputs to $G_1^*(p)$ and $G_2^*(p)$ are given by \eqref{inputthm34}, where $\{e(kh)\}$ is white noise of unit variance. The noiseless continuous-time output is simulated by oversampling the input by a factor of 100, and assuming a first-order hold behavior. White noise is added to the samples of the simulated output, with variance corresponding to an amplitude signal-to-noise ratio of approximately five.

For each system, $N=100$ causal samples are obtained. In order to capture the effect of the non-causal part of the input, we computed the noiseless output $x(t)$ starting from $t=-Nh$, and only the causal sequence $\{x(kh)\}_{k=1}^N$ was contaminated with noise and used for identification. Three causal and four non-causal estimators are tested, all with 40 parameters each; the non-causal ones use $(M_{nc},M_c)=(15,24)$. The following estimators were considered:
\begin{enumerate}
	\item
	Causal least-squares (C-LS);
	\item
	Non-causal least-squares (NC-LS);
	\item
	Causal TC-regularized least-squares (C-TC);
	\item
	Non-causal TC-regularized least-squares (NC-TC);
	\item
	Causal SS-regularized least-squares (C-SS);
	\item
	Non-causal SS-regularized least-squares (NC-SS);
	\item
	Optimal non-causal regularized least-squares (Oracle).
\end{enumerate}
The causal TC and SS-regularized least-squares estimators are obtained via the \texttt{impulseest} command in MATLAB, while the non-causal TC and SS kernels are tuned by solving \eqref{findbeta}. The unrealizable oracle is computed by \eqref{oracle}. The performance of these estimators is compared via Monte Carlo simulations with 300 different noise realizations. Validation data are generated to compute the fit metric
\begin{equation}
\textnormal{Fit} = 100 \left(1- \frac{\|\mathbf{y}-\bm{\Phi}\hat{\bm{\rho}}_N \|_2}{\|\mathbf{y}-\bar{\mathbf{y}}\|_2}\right), \notag
\end{equation}
where $\bar{\mathbf{y}}$ indicates the sample mean of $\mathbf{y}$.

The box plots of the fit of each method, for $G_1^*(p)$ and $G_2^*(p)$, are shown in Figure \ref{fig_fit1}. In both cases there is an advantage in considering non-causal parameters versus fixing them to zero. For $G_1^*(p)$ only a modest improvement can be observed by estimating the non-causal terms, as the sampling period is not large compared to the bandwidth of the system and thus the non-causal component of the impulse response is not significant. On the other hand, the gain in performance in the test with $G_2^*(p)$ is substantial: this is explained by the fact that an important part of the band-limited equivalent impulse response is non-causal. As expected for small sample sizes, an increase in performance is observed on both systems if regularization is included. Note that including regularization in the causal least-squares estimate improves the fit of the causal model but will anyway disregard the significant non-causal components, leading to a worse performance compared to the non-causal regularized least-squares estimators.

\begin{figure}
	\centering{
		\includegraphics[width=0.48\textwidth]{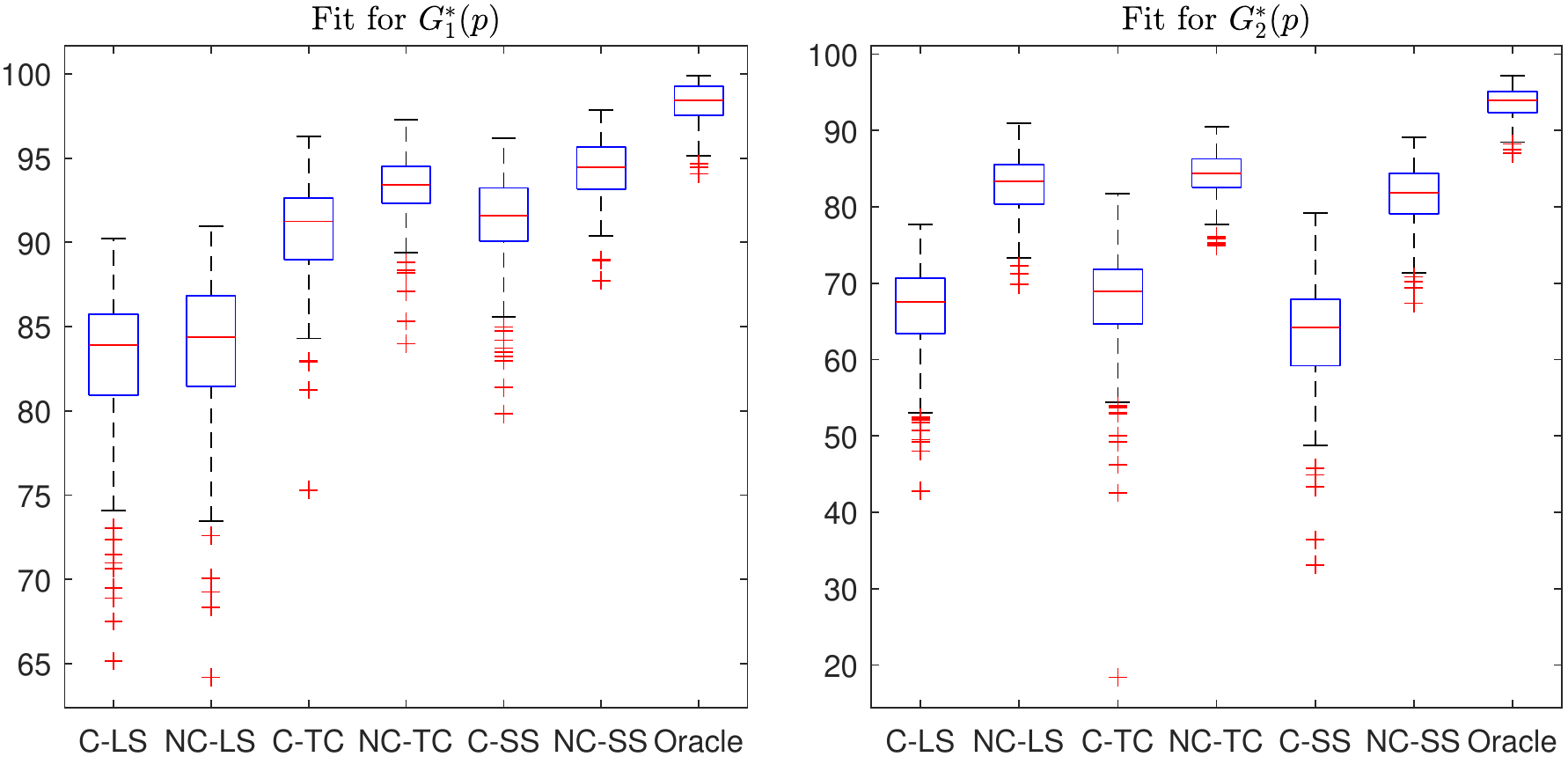}
		\vspace{-0.5cm}
		\caption{Fit box plots of seven different impulse response estimators. Left: $G_1^*(p)$; right: $G_2^*(p)$.}
		\vspace{-0.5cm}
		\label{fig_fit1}}
\end{figure}

\subsection{Random systems}
We now test the proposed methods on a set of random systems. Similar to the bank of test systems presented in \cite{chen2012estimation}, a number of continuous-time systems of order 30 are generated using the \texttt{rss} command in MATLAB, and are sampled at a frequency $f=1/h$ equal to three times the bandwidth. The systems are split into 300 ``fast" systems whose poles have real parts not greater than $\log(0.95)/h$, and 300 ``slow" systems which have at least one pole with real part greater than $\log(0.95)/h$. The systems are excited with the same input as in Section \ref{sec:simulations_a}, and $N=500$ samples are obtained. The outputs are contaminated by Gaussian white noise with an SNR of approximately 20.

All estimators tested previously, except the oracle, are assessed in this new scenario. The box plots of the fit metric for the fast and slow systems are presented in Figure \ref{fig_random_SNR_20}. In both cases, the non-causal estimators are the ones for choice in terms of median fit, which provides strong evidence for the adequacy of including non-causal terms for continuous-time system identification with band-limited input excitation. Note that the ``fast" systems exhibit on average a greater improvement if non-causal terms are included. This can be explained by the fact that the sampling period for this case is relatively large compared to the dominant time-constant, which can induce greater non-causal values for the band-limited equivalent impulse response as studied in Section \ref{sec:preliminaries}.

\begin{figure}
	\centering{
		\includegraphics[width=0.48\textwidth]{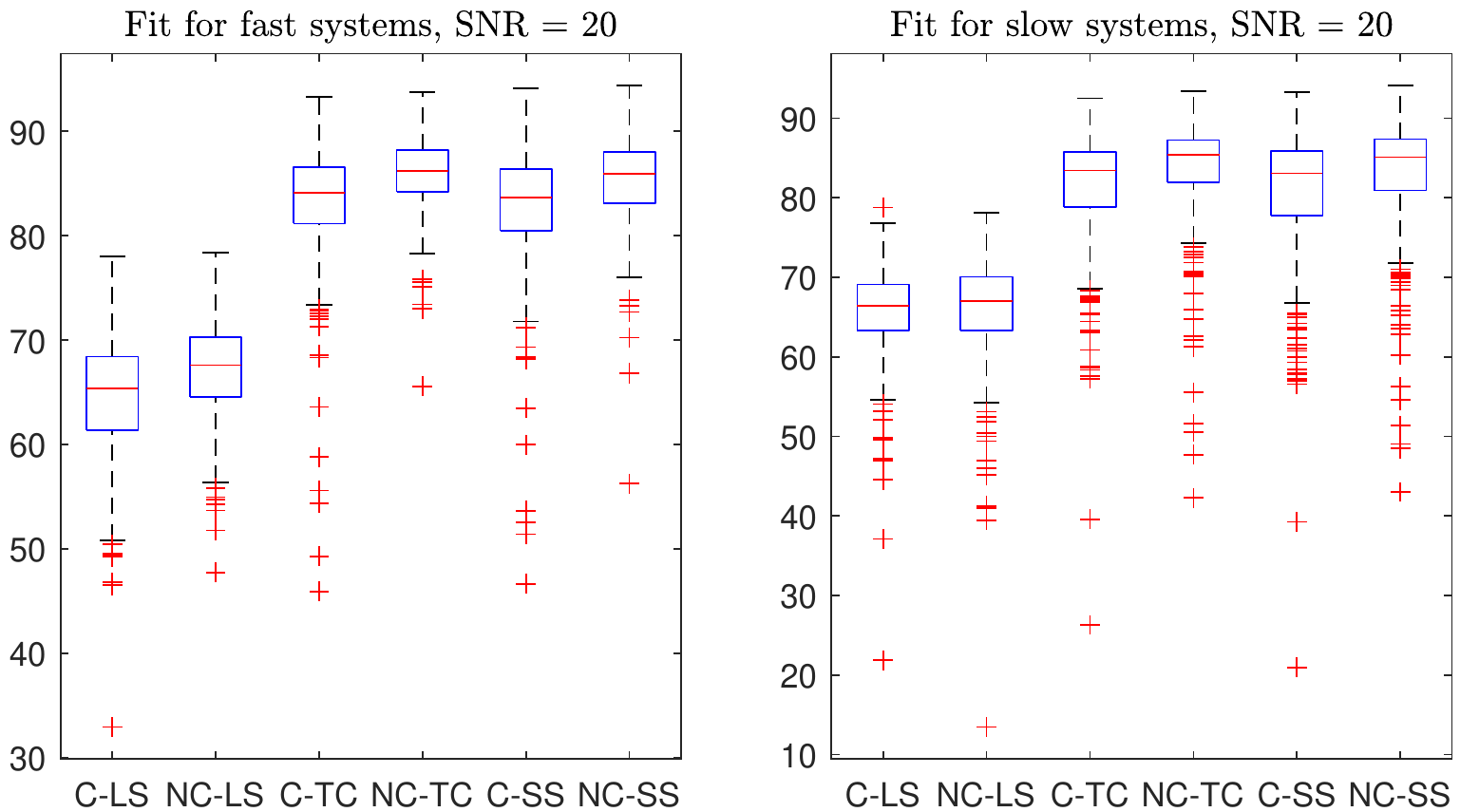}
		\vspace{-0.5cm}
		\caption{Fit box plots of six different impulse response estimators. Left: ``fast" random systems; right: ``slow" random systems.}
		\vspace{-0.4cm}
		\label{fig_random_SNR_20}}
\end{figure}

\section{Conclusions}
\label{sec:conclusions}
In this work we have introduced a novel method for estimating the band-limited equivalent impulse response of a continuous-time system based on non-causal regularized least squares. We began this study by showing that the equivalent discrete-time system for band-limited inputs is non-causal, which led to the analysis of non-causal least-squares estimators for estimating the band-limited equivalent impulse response. The proposed non-causal kernel-based methods show significant advantages in terms of the fit criterion compared to the state-of-the-art methods, since they model the non-causal terms that are commonly overlooked.

\appendix 
\hspace{0.035cm} \textit{Proof of Theorem \ref{thm31}}:
	By leveraging \eqref{ykh} and \eqref{lsestimate}, we have
	\begin{equation}
	\hat{\bm{\rho}}_N \hspace{-0.05cm} = \hspace{-0.05cm} \bm{\rho}^* \hspace{-0.07cm} + \hspace{-0.06cm} \left[\sum_{k=1}^N  \hspace{-0.03cm} \bm{\varphi}(kh)\bm{\varphi}^{\hspace{-0.03cm}\top}\hspace{-0.04cm}(kh)\right]^{ \hspace{-0.08cm}-\hspace{-0.04cm}1} \hspace{-0.05cm}\left[\sum_{k=1}^N \bm{\varphi}(kh)w(kh)\right]\hspace{-0.03cm}, \notag
	\end{equation}
	where $w(kh)$ is given by
	\begin{equation}
	w(kh) = h \hspace{-1cm}\sum_{\{n\colon n<-M_{nc}\}\cup \{n\colon n>M_{c}\}}  \hspace{-1cm} u([k-n]h)g_{\textnormal{BL}}^*(nh) + v(kh). \notag
	\end{equation}
	The ergodic lemma in \cite[Lemma 3.1]{soderstrom1975ergodicity} and the continuous-mapping theorem 
	permit us to write, as $N\to\infty$,
	\begin{equation}
	\label{asymptotics}
	\hat{\bm{\rho}}_N \xrightarrow{a.s.} \bm{\rho}^* + \mathbb{E}\{\bm{\varphi}(kh)\bm{\varphi}^\top(kh)\}^{-1} \mathbb{E}\{\bm{\varphi}(kh)w(kh)\},
	\end{equation}
	where the matrix being inverted is a positive-definite scalar matrix. Since the system in \eqref{system} is asymptotically stable, its band-limited impulse response has finite 2-norm. Thus, thanks to \cite[Lemma 3.1]{soderstrom1975ergodicity} and the fact that $\{u(t)\}$ is white noise at the sampling instants, the following expectation can be computed:
	\begin{flalign}
	&\mathbb{E}\{\bm{\varphi}(kh)w(kh)\} &\notag  \\
	&=h^2 \hspace{-0.55cm}
	\sum_{\substack{\{n: \hspace{0.02cm}n<\hspace{-0.03cm}-\hspace{-0.02cm}M_{nc}\hspace{-0.02cm}\}\cup\\\{n:\hspace{0.02cm} n>\hspace{-0.02cm}M_{c}\}}}
	\hspace{-0.1cm}\begin{bmatrix}
	\mathbb{E}\{\hspace{-0.01cm}u([k\hspace{-0.02cm}-\hspace{-0.02cm}n]h)u([k\hspace{-0.02cm}+\hspace{-0.02cm}M_{nc}]h)\hspace{-0.02cm}\}g_{\textnormal{BL}}^*\hspace{-0.02cm}(nh) \\ 
	\mathbb{E}\{\hspace{-0.01cm}u([k\hspace{-0.02cm}-\hspace{-0.02cm}n]h)u([k\hspace{-0.02cm}-\hspace{-0.02cm}1\hspace{-0.02cm}+\hspace{-0.02cm}M_{nc}]h)\hspace{-0.02cm}\}g_{\textnormal{BL}}^*\hspace{-0.02cm}(nh) \\  
	\vdots \\  
	\mathbb{E}\{\hspace{-0.01cm}u([k\hspace{-0.02cm}-\hspace{-0.02cm}n]h)u([k\hspace{-0.02cm}-\hspace{-0.02cm}M_c]h)\}g_{\textnormal{BL}}^*\hspace{-0.02cm}(nh)
	\end{bmatrix} & \notag \\
	&= \mathbf{0}. &\notag
	\end{flalign}
	This result, together with \eqref{asymptotics}, leads to the desired conclusion. \hfill $\square$ 

\hspace{0.035cm} \textit{Proof of Theorem \ref{thm32}}:
	We have
	\begin{equation}
	\sqrt{ \hspace{-0.05cm}N} \hspace{-0.02cm}( \hspace{-0.01cm}\hat{\bm{\rho}}_{\hspace{-0.02cm}N} \hspace{-0.02cm}-\hspace{-0.02cm}\bm{\rho}^{\hspace{-0.01cm}*} \hspace{-0.03cm}) \hspace{-0.1cm}= \hspace{-0.12cm} \left[ \hspace{-0.08cm}\frac{1}{N} \hspace{-0.07cm}\sum_{k=1}^N \hspace{-0.07cm} \bm{\varphi}(\hspace{-0.01cm}kh \hspace{-0.02cm})\bm{\varphi}^{\hspace{-0.05cm}\top} \hspace{-0.06cm}(\hspace{-0.01cm}kh\hspace{-0.02cm}) \hspace{-0.04cm}\right]^{\hspace{-0.1cm}- \hspace{-0.03cm}1} \hspace{-0.12cm}\left[ \hspace{-0.08cm}\frac{1}{\sqrt{\hspace{-0.05cm}N}} \hspace{-0.07cm}\sum_{k=1}^N  \hspace{-0.07cm}\bm{\varphi}(\hspace{-0.01cm}kh \hspace{-0.02cm})w(\hspace{-0.01cm}kh\hspace{-0.02cm})\hspace{-0.04cm}\right] \hspace{-0.1cm}. \notag
	\end{equation}
	Since the first sum converges to its expected value for large $N$, we can write
	\begin{equation}
	\sqrt{N}(\hat{\bm{\rho}}_N-\bm{\rho}^*) = \frac{1}{h^2 \lambda^2} \left[\frac{1}{\sqrt{N}}\sum_{k=1}^N \bm{\varphi}(kh)w(kh)\right] + o_p(1). \notag
	\end{equation}
	By Lemma A4.1 of \cite{soderstrom1983instrumental}, the sum in brackets above converges in distribution to a zero-mean normal random variable with covariance
	\begin{equation}
	\mathbf{P}= \lim_{N\to \infty} \frac{1}{N} \sum_{k=1}^N \sum_{n=1}^N \mathbb{E}\{ \bm{\varphi}(kh)w(kh)w(nh) \bm{\varphi}^\top(nh)\}. \notag
	\end{equation} 
	The asymptotic covariance $\mathbf{P}_{\textnormal{LS}}$ of Theorem \ref{thm32} is thus~obtained by applying Lemma A4.2 of \cite{soderstrom1983instrumental} and its corollary.~$\square$ 

\bibliography{References}
\end{document}